\newtheorem{theorem}{Theorem}
\newtheorem{lemma}{Lemma}
\newtheorem{prop}{Proposition}
\theoremstyle{definition}
\newtheorem{definition}{Definition}
\newtheorem{remark}{Remark}
\newtheorem{algorithm}{Algorithm}
\newcommand{\off}[1]{}
\newcommand{\p}{\mathrm{Pr}}
\begin{document}
\title{A Monotone Circuit Construction for\\ Individually-Secure Multi-Secret Sharing} 

 \author{
   \IEEEauthorblockN{Cailyn Bass\IEEEauthorrefmark{1},
                     Alejandro Cohen\IEEEauthorrefmark{2},
                     Rafael G. L. D'Oliveira\IEEEauthorrefmark{1},
                     and Muriel M\'edard\IEEEauthorrefmark{3}}
   \IEEEauthorblockA{\IEEEauthorrefmark{1}
                    School of Mathematical and Statistical Sciences, Clemson University, Clemson, SC, USA,
                     \{cailynb,rdolive\}@clemson.edu}
   \IEEEauthorblockA{\IEEEauthorrefmark{2}
                    Faculty of Electrical and Computer Engineering, Technion--Institute of Technology, Haifa, Israel,
                     alecohen@technion.ac.il}
   \IEEEauthorblockA{\IEEEauthorrefmark{3}
                     Research Laboratory of Electronics, Massachusetts Institute of Technology, Cambridge, MA, USA,
                     medard@mit.edu}
 }

\maketitle


\begin{abstract}
   In this work, we introduce a new technique for taking a single-secret sharing scheme with a general access structure and transforming it into an individually secure multi-secret sharing scheme where every secret has the same general access structure. To increase the information rate, we consider \emph{Individual Security} which guarantees zero mutual information with \emph{each secret individually}, for any unauthorized subsets. Our approach involves identifying which shares of the single-secret sharing scheme can be replaced by linear combinations of messages. When $m-1$ shares are replaced, our scheme obtains an information rate of $m/|S|$, where $S$ is the set of shares. This provides an improvement over the information rate of $1/|S|$ in the original single-secret sharing scheme.
\end{abstract}

\section{Introduction}

A secret sharing scheme is a method for sharing a secret amongst a set of participants in such a way that only certain authorized subsets of the participants are able to retrieve the secret. Any unauthorized subset that combines their shares should gain no new information about the secret. The first secret sharing schemes were introduced in 1979 by Blakely \cite{blakely} and Shamir \cite{shamir}. These initial schemes are now called $(k,n)$-threshold schemes since any set of participants of size greater than or equal to $k$ is authorized to compute the secret. In 1988 Ito et. al. \cite{ito} introduced a multiple assignment scheme to produce secret sharing schemes for any general access structure. Monotone Boolean functions were then utilized by Benaloh et. al. \cite{benaloh} in 1992 to improve the efficiency of the multiple assignment scheme. These functions are the main idea behind the monotone circuit construction \cite{stinson} that we use throughout this paper.  

The notion of security utilized in the previous secret sharing works is that of perfect security, first introduced by Shannon \cite{shannon1949communication} in 1949.\footnote{Shannon published an earlier version of this paper in 1945 \cite{shannon1945mathematical} which was classified. Interestingly, this precedes Shannon's \emph{other} seminal paper \cite{shannon1948mathematical}.} This same notion was utilized by Wyner in 1975 when proposing the wire-tap channel \cite{wyner1975wire,ozarow1984wire}. Carleial and Hellman then proposed in 1977 the notion of individual security \cite{Hellman} to obtain higher data rates for the case where messages are uniformly and independently distributed. Since then, individual security has been applied to various communication and storage systems, e.g., single communication link \cite{SCMUniform}, broadcast channels \cite{mansour2014secrecy,chen2015individual,mansour2015individual,mansour2016individual}, multiple-access channels \cite{goldenbaum2015multiple,chen2016secure}, networks and multicast communications \cite{bhattad2005weakly,silva2009universal,cohen2018secure}, algebraic security \cite{lima2007random,claridge2017probability}, terahertz wireless systems \cite{cohen2023absolute}, angularly dispersive links \cite{yeh2023securing}, and distributed storage systems \cite{kadhe2014weakly,kadhe2014weakly1,paunkoska2016improved,paunkoska2018improving,bian2019optimal}. Individual security guarantees that an eavesdropper, which can obtain any subset (up to a certain size) of the shared information, obtains no information about each message individually. Yet, an eavesdropper may obtain some insignificant controlled information about the mixture of all the messages.

Karnin et. al. in 1983 \cite{hellman2} brought the notion of individual security to secret sharing for the case of multiple secrets. This setting is known as multi-secret sharing.\footnote{In order to better distinguish the settings, in the rest of the paper we refer to the traditional secret sharing as single-secret sharing.} Since then, much work has been done in multi-secret sharing, e.g., works on threshold schemes \cite{hellman2,multiThreshold,thresholdConstruction,multisecret}, generalizing ideal secret sharing schemes to ideal multi-secret schemes \cite{ideal}, and defining different access structures for different messages \cite{crescenzo,multisecret,efficientSharing}. However, to the best of our knowledge, there is no general construction of individually-secure multi-secret sharing schemes for general access structures in the literature. For example, as the monotone circuit construction \cite{stinson} for single-secret sharing.

Our main contributions are as follows. In Algorithm~\ref{alg: rep} we show how to convert a single-secret sharing scheme with a general access structure into an individually secure multi-secret sharing scheme where each message has the same access structure. The algorithm works by identifying which shares of the single-secret sharing scheme can be replaced by combinations of messages. In Theorem~\ref{teo: construction}, we show that the scheme obtained from Algorithm~\ref{alg: rep} by performing $m-1$ shares replacements is a multi-secret sharing scheme with information rate of $m/|S|$, where $S$ is the set of shares. This provides an improvement over the information rate of $1/|S|$ in the original single-secret sharing scheme. Moreover, we show in Theorem~\ref{teo: nonRep} that making extra replacements causes the decodability of the scheme to fail.

\begin{table}[!t]
\centering
\begin{tabular}{c c}
\toprule
Symbol & Description  \\
\midrule
$\mathbb{F}_q$ & Finite Field of size $q$\\
$\mathcal{P}=\{P_1,\ldots,P_n\}$ & Set of participants \\
$n$ & Number of participants\\
$\Gamma$ & Access structure\\ & (set of all authorized subsets)\\
$A_i\subseteq\mathcal{P}$ & A minimal authorized subset\\
$\Gamma_0=\{A_1,\ldots,A_r\}$ & Basis of the access structure\\ & (set of minimal authorized subsets)\\
$r$ & Number of minimal authorized subsets\\
$2^\mathcal{P}\setminus\Gamma$ & Set of unauthorized subsets\\
$U$ & An unauthorized subset\\
$S_A$ & Set of shares belonging to\\ & authorized subset $A$\\
$S_j^A$ & Share of $P_j$ associated with $A$\\
$S_{P_j}=\{S_j^A\}_{A\in\Gamma_0}$ & Set of shares belonging to $P_j$\\
$S=\bigcup_{P_j\in\mathcal{P}}{S_{P_j}}$ & Set of all shares\\
$S_U$ & Set of shares belonging to\\ & an unauthorized subset\\
$m$ & Number of messages\\
$M=\{M_1,\ldots,M_m\}$ & Set of all messages\\
$\mathcal{R}_{SS}$ & Information rate of a\\ & secret sharing scheme\\
$\mathcal{R}_{MS}$ & Information rate of a\\ & multi-secret sharing scheme\\
\bottomrule \vspace{-5pt}
\end{tabular}
\caption{List of Symbols}
\end{table}

\section{Preliminaries on Single-Secret Sharing}

In this section, we first give definitions for a single-secret sharing scheme on a general access structure and its information rate. We then give an overview of the monotone circuit construction from \cite{stinson}.

\subsection{Single-Secret Sharing}

Given a group of $n$ participants $\mathcal{P}=\{P_1,\ldots,P_n\}$ a secret message $M_1$, and a set of $k$ authorized subsets $\Gamma = \{A_1, \ldots, A_k\}$, which we call an access structure, a single-secret sharing scheme consists in assigning to each participant a set of shares $S_{P_1}, \ldots, S_{P_n}$ in such a way that only authorized subsets of $\mathcal{P}$ are able to retrieve the secret, while unauthorized subsets $U \notin \Gamma$ remain completely ignorant about the secret.

As is common in the literature \cite{ito,benaloh,stinson}, we only consider monotone access structures, i.e., if $A\in \Gamma$ and $A'\supseteq A$, then $A'\in\Gamma$. Thus, it is sufficient to consider only the minimal authorized subsets $A_i\subseteq\mathcal{P}$. The set of minimal authorized subsets is called a basis for $\Gamma$ which we denote by $\Gamma_0$.

We denote the set of all shares by $S=\bigcup_{P_j\in\mathcal{P}}{S_{P_j}}$, the shares held by an authorized subset by $S_A\subseteq S$, and the shares held by an unauthorized subset by $S_U\subseteq S$. This leads to the following definition.

\begin{definition}
    Given a basis $\Gamma_0=\{A_1,\ldots,A_r\}$ for an access structure $\Gamma$ and a secret $M_1\in\mathbb{F}_q$, a \emph{single-secret sharing scheme realizing $\Gamma$} is one in which a set of shares $S\subseteq\mathbb{F}_q$ is created such that the following hold.
    \begin{enumerate}
        \item \textbf{Decodability}: The conditional entropy $\mathrm{H}(M_1|S_A)=0$ for all $A\in\Gamma$. In other words, every authorized subset is able to compute the secret. 
        
        \item \textbf{Security}:  The mutual information $\mathrm{I}(M_1;S_U)=0$ for all $U\in 2^{\mathcal{P}}\setminus \Gamma$. In other words, no unauthorized subset learns any new information about the secret.
    \end{enumerate}
\end{definition}

\begin{remark}
    We only need to consider access structures where every authorized subset consists of more than one participant. If $\Gamma$ is an access structure on $n$ participants that has an authorized subset $A=\{P_i\}$ for some $i$, then we can consider the access structure $\Gamma'$ on $n-1$ participants where $\Gamma'_0=\Gamma_0\setminus A$. This is due to the fact that $P_i$'s share would be the secret itself.
\end{remark}

We now define our performance metric.

\begin{definition}
    The \emph{information rate} of any secret sharing scheme (single or multi) is measured in terms of the ratio between the number of secrets and the total number of shares.
\end{definition}

\subsection{Monotone Circuit Construction}

The monotone circuit construction \cite{stinson} is a single-secret sharing scheme that can realize any access structure $\Gamma$. The scheme works as follows. Given a set of participants $\mathcal{P}$ and a basis of authorized subsets $\Gamma_0=\{A_1,\ldots,A_r\}$ of $\Gamma$, we associate a monotone Boolean function that represents $\Gamma_0$. For example, if $\Gamma_0=\{A_1,A_2\}$ where $A_1=\{P_1,P_2\}$ and $A_2=\{P_1,P_3\}$, then the monotone function representing $\Gamma_0$ is $(P_1 \land P_2)\lor(P_1 \land P_3)$. This monotone Boolean function is then utilized to create a circuit where the gates of the circuit correspond to the clauses of the function. Share assignment for the monotone circuit construction is carried out as follows. For participant $P_j\in A$, the share assigned to $P_j$ associated with $A$ is denoted by $S_j^{A}$. These shares are chosen uniformly at random but in such a way that their sum is equal to $M_1$, i.e., $\sum_{P_j\in A}{S_j^{A}}=M_1$. We achieve this by fixing one of the shares and choosing the others uniformly at random.

We remark here that to simplify the technical aspects and allow us to focus on the key methods and results, during this paper we have chosen to work with the monotone circuit construction for its simplicity. However, it is known that there are other, more efficient, constructions for general access structure secret sharing such as the vector space construction of \cite{brickell1989some}, the decomposition construction of \cite{stinson}, and the geometric construction of \cite{simmons1988really,simmons1991introduction,77a41a127060489e8ca2a6e9822932d5}.

\section{Multi-Secret Sharing}

Given a group of participants $\mathcal{P}$ and an access structure $\Gamma$ as before along with a set of uniform and independently distributed messages $M=\{M_1,\ldots,M_m\}$, a multi-secret sharing scheme consists in creating a set of shares $S$ in such a way that authorized subsets are able to reconstruct all $m$ messages while unauthorized subsets remain completely ignorant about each individual message. Schemes that have this property are said to be individually secure. The following is a more formal definition.

\begin{definition}
    Given a basis $\Gamma_0=\{A_1,\ldots,A_r\}$ for an access structure $\Gamma$ and a set of messages $M=\{M_1,M_2,\ldots,M_m\}\subseteq\mathbb{F}_q$ where each message is uniformly and independently distributed, a \emph{multi-secret sharing scheme realizing $\Gamma$} is one in which a set of shares $S\subseteq\mathbb{F}_q$ is distributed amongst the participants such that the following hold.
    \begin{enumerate}
        \item \textbf{Decodability}: The conditional entropy $\mathrm{H}(M|S_A)=0$ for all $A\in\Gamma$. In other words, every authorized subset can compute all $m$ messages. 

        \item \textbf{Individual Security}: For any unauthorized subset $U\in 2^{\mathcal{P}}\setminus \Gamma$, the mutual information $\mathrm{I}(M_\ell;S_U)=0$, for any $\ell\in\{1,\ldots,m\}$. In other words, unauthorized subsets learn no new information about each message individually.
    \end{enumerate}
\end{definition}

In Theorem~\ref{teo: construction} we show how to transform a single-secret sharing scheme for a general access structure into a multi-secret sharing scheme for the same access structure. In the next section, we present a detailed example of our approach to showcase the essential ingredients of the scheme.

\section{An Example of Our Approach} \label{sec: example}

In the example given herein, we show how to convert a single-secret sharing scheme from \cite{stinson}, with a general access structure, into an individually secure multi-secret sharing scheme for the same access structure. Our technique consists in identifying shares from the original scheme which can be substituted with a linear combination of messages. We begin by presenting a single-secret sharing scheme.

\subsection{Single-Secret Sharing for a General Access Structure}

Let $\mathcal{P}=\{P_1,P_2,P_3,P_4\}$ be the set of participants and consider an access structure $\Gamma$ with a basis \[\Gamma_0=\{\{P_1,P_2,P_4\},\{P_1,P_3,P_4\},\{P_2,P_3\}\}.\] Thus, for example, participants $P_1,P_2,\text{ and } P_4$ can compute the secret but participants $P_1$ and $P_2$ cannot. Also, since participants $P_2$ and $P_3$ can compute the secret, so can participants $P_1$, $P_2$, and~$P_3$, i.e., $\{P_1, P_2, P_3\} \in \Gamma.$ The access structure $\Gamma_0$ can be represented by the monotone circuit in Figure \ref{fig: 1}.

    \begin{figure}[t]
        \centering

        \includegraphics[width = \columnwidth]{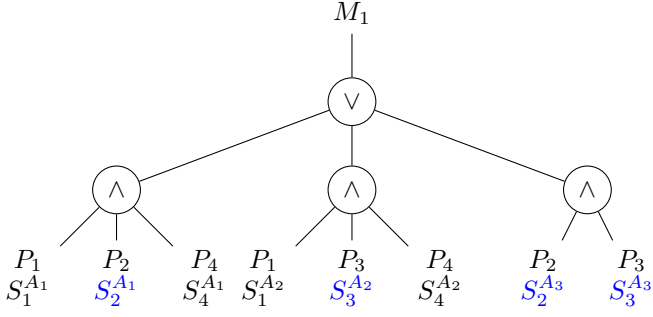}

        \caption{The monotone circuit for $\Gamma_0$, which can be represented by the following monotone Boolean function: $(P_1\land P_2\land P_4)\lor(P_1\land P_3\land P_4)\lor(P_2\land P_3)$. Each $\land$ gate corresponds to adding the shares on the input wires and thus represents a secret sharing scheme among the participants of $A_i$. These smaller schemes are the key to proving the monotone circuit single-secret sharing scheme is decodable and secure. The replaceable shares for our construction are shown in blue.} \label{fig: 1}
    \end{figure}

The construction in \cite{stinson} consists in assigning one share to each participant corresponding to each authorized subset $A \in \Gamma_0$. Denote the authorized subsets in $\Gamma_0$ by $A_1 = \{ P_1, P_2, P_4 \}$, $A_2 = \{ P_1, P_3, P_4 \}$, and $A_3 = \{ P_2, P_3\}$. Then by \cite{stinson}, participant $P_1$ is assigned two shares, $S_1^{A_1}$ and $S_1^{A_2}$, participant $P_2$ is assigned two shares, $S_2^{A_1}$ and $S_2^{A_3}$, participant $P_3$ is assigned two shares, $S_3^{A_2}$ and $S_3^{A_3}$, and participant $P_4$ is assigned two shares, $S_4^{A_1}$ and $S_4^{A_2}$.

The shares in each authorized subset are chosen uniformly at random but such that their sum is equal to the secret $M_1$. In this example, this means that shares are constructed so that $S_1^{A_1}+S_2^{A_1}+S_4^{A_1}=M_1$, $S_1^{A_2}+S_3^{A_2}+S_4^{A_2}=M_1$, and $S_2^{A_3}+S_3^{A_3}=M_1$. One way to obtain this is by fixing one of the $S_j^{A_1}$'s, one of the $S_j^{A_2}$'s, and one of the $S_j^{A_3}$'s and choosing the other shares uniformly at random. For the single-secret sharing problem in \cite{stinson}, the choice of which share to fix is not relevant. In other words, fixing $S_2^{A_1}$ instead of fixing $S_4^{A_1}$ as is done in \cite{stinson} does not affect the decodability or security of the scheme. This choice, however, is relevant in our construction when we replace shares with linear combinations of messages.

The decodability of the scheme is shown in \cite{stinson}, and follows from the fact that the shares associated with each authorized subset form an independent single-secret sharing scheme. The security of the scheme also follows from this fact since the schemes being independent means that no share is assigned to more than one participant. Note that since there are $8$ total shares and $1$ secret, the information rate of this single-secret sharing scheme is $\mathcal{R}_{SS}=\frac{1}{8}$. 

\subsection{The Conditions for Replacing a Share}

In order to obtain a multi-secret sharing scheme with the same access structure, $\Gamma$, for all messages, our construction consists of replacing certain random shares from the single-secret sharing scheme with linear combinations of messages. The criterion for identifying which shares can be replaced is as follows. A share $S_j^{A_i}$ is replaceable if for every $A'\in\Gamma_0$ either $P_j \in A'$ or $A\setminus\{P_j\}\subseteq A'$. Thus, in our example, the share $S_2^{A_1}$ is replaceable since $P_2 \in A_1 \cap A_3$ and $A_1 \setminus P_2 \subseteq A_2$. However, the share $S_1^{A_1}$ is not, since $P_1 \notin A_3$ and $A_1 \setminus P_1 \not\subseteq A_3$.

The first step in our construction consists in determining which shares are replaceable. Checking for the replaceability conditions, we obtain that the replaceable shares for the access structure $\Gamma_0$ are $S_2^{A_1},S_3^{A_2},S_2^{A_3},$ and $S_3^{A_3}$, as illustrated in Figure \ref{fig: 1}. To show why the replaceability conditions are defined as so, consider the following two examples where we want to introduce a new message $M_2$ into the shares. For these examples, we choose $S_4^{A_1}=M_1-S_1^{A_1}-S_2^{A_1}$ to be the fixed share of $A_1$.

Consider replacing the share $S_2^{A_1}$ with the linear combination of messages $2M_1+M_2$. The authorized set $A_1$ can compute $M_1$ because the shares in $A_1$ still sum to $M_1$. They can then compute $M_2$ since they can subtract $2M_1$ from $S_2^{A_1} = 2M_1+M_2$. Again, the authorized set $A_2$, can compute $M_1$ because its shares still sum to $M_1$. They can then compute $M_2$ since $M_2 = -S_1^{A_1} - S_4^{A_1} - M_1$. Finally, the authorized set $A_3$ can compute $M_1$ because the shares in $A_3$ still sum to $M_1$. Since $A_3$ contains participant $P_2$,  the share $S_2^{A_1}=2M_1+M_2\in S_{A_3}$. So they can compute $M_2$ by subtracting $2M_1$ from $S_2^{A_1}$. Thus, after computing $M_1$, all three authorized subsets can compute $M_2$.

We now show that if the replaceability condition is not satisfied, then our replacement technique does not work. As described above, the share $S_1^{A_1}$ does not satisfy the replacement condition. Suppose we instead replace the share $S_1^{A_1}$ with $2M_1+M_2$. We show that the authorized set $A_3$ is not able to compute $M_2$. Note that the only shares $A_3$ has access to are $S_2^{A_1},S_2^{A_3},S_3^{A_2},$ and $S_3^{A_3}$. Since $S_1^{A_1}=M_1-S_2^{A_1}-S_4^{A_1}$ and neither $S_1^{A_1}$ nor $S_4^{A_1}$ are accessible to the participants in $A_3$, there is no linear combination of the shares in $S_{A_3}$ that allow for the recovery of $S_1^{A_1}$. Thus, $M_2$ cannot be computed by the participants in authorized subset $A_3$. 

The second step in our construction is to choose, for each authorized subset $A \in \Gamma_0$, which share $S_j^{A}$ to fix. The choice is made as follows. For each authorized subset $A\in\Gamma_0$, if every share $S_j^{A}$ is replaceable, then pick any of them to be fixed. Otherwise, pick a non-replaceable share $S_j^{A}$ to be fixed. The choice of which share to fix is relevant for our construction since we want to maximize the number of messages we are able to introduce into the scheme. 

The third and final step in our construction is to take the shares that are replaceable, but not fixed, and replace them with linear combinations of the message $M_1$ with a new message $M_\ell$, for each replaced share. These linear combinations are of the form $aM_1+bM_\ell$ where $a \notin \{0,1\}$ and $b\neq0$. It is necessary for both $a$ and $b$ to be different than zero, because no participant should have a share which consists of a single message $M_\ell$, otherwise, we would have an authorized subset with a single participant.   Furthermore, $a\neq1$ since a fixed share $S_j^{A}$ has the form $M_1-\sum_{P_k\in A\setminus\{P_j\}}{S_k^{A}}$. Suppose that we replace the share $S_i^A$ with $aM_1+bM_\ell$ where $a=1$. Then since $A\in\Gamma_0$ is a minimal authorized subset, $A\setminus \{P_i\}$ is an unauthorized subset. If the participants in $A\setminus P_i$ add their shares, they can compute $M_1-S_i^{A}=M_1-M_1-bM_\ell=-bM_\ell$ which violates the security of the scheme.

\subsection{From Single-Secret to Multi-Secret Sharing Schemes}

We now show the details of converting the monotone circuit construction to our construction for the basis $\Gamma_0$ depicted in Figure \ref{fig: 1} using the following steps:
\begin{enumerate}
    \item Identify which shares are replaceable. A share $S_j^{A_i}$ is replaceable if $P_j\in A$ or $A_i\setminus \{P_j\}\subseteq A$ for all $A\in\Gamma_0$.
    \item For each $A\in\Gamma_0$, if the share $S_j^A$ is replaceable for all participants $P_j\in A$, pick one of them to fix. Otherwise, fix a non-replaceable share. 
    \item Replace all shares $S_j^A$ such that $S_j^A$ is replaceable but not fixed. The replacement is of the form $aM_1+bM_\ell$ where $a\notin{0,1}$ and $b\neq 0$. Note that each $M_\ell$ appears in exactly one replacement.
\end{enumerate}

We note that any remaining shares that have not been replaced or fixed are still uniform random over $\mathbb{F}_q$.

As mentioned earlier (and shown in Figure \ref{fig: 1}), the replaceable shares for $\Gamma_0=\{\{P_1,P_2,P_4\},\{P_1,P_3,P_4\},\{P_2,P_3\}\}$ are $S_2^{A_1},S_3^{A_2},S_2^{A_3},$ and $S_3^{A_3}$. We choose to fix the shares $S_4^{A_1}$, $S_4^{A_2}$, and $S_3^{A_3}$. Thus, $S_4^{A_1}=M_1-S_1^{A_1}-S_2^{A_1}$, $S_4^{A_2}=M_1-S_1^{A_2}-S_3^{A_2}$, and $S_3^{A_3}=M_1-S_2^{A_3}$. The replacements we make for this access structure are $S_2^{A_1}=2M_1+M_2$, ~$S_2^{A_3}=2M_1+M_3$, and $S_3^{A_2}=2M_1+M_4$.

We now prove that the scheme described above is decodable (all authorized subsets can compute all messages) and individually secure (no unauthorized subset gains any new information about any individual messages).

\begin{prop}[Decodability]
    $\mathrm{H}(M_1,M_2,M_3,M_4|S_A)=0$ where $S_A$ is the set of shares held by an authorized subset, i.e. every subset $A\in\Gamma_0$ can compute all messages $M_1,M_2,M_3,M_4$.
\end{prop}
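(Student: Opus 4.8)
The plan is to verify both decodability requirements directly, using the monotonicity of $\Gamma$ to reduce the work to the three minimal authorized subsets. Indeed, any $A\in\Gamma$ contains some $A_i\in\Gamma_0$, so once we show $A_i$ computes every message, $A$ does too by simply ignoring its extra shares. Thus it suffices to treat $A_1=\{P_1,P_2,P_4\}$, $A_2=\{P_1,P_3,P_4\}$, and $A_3=\{P_2,P_3\}$, and show each recovers $M_1,M_2,M_3,M_4$.

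First I would recover $M_1$ for each subset. Because every fixed share was chosen precisely so that the shares of its associated $A_i$-scheme sum to $M_1$, the identity $\sum_{P_k\in A_i}S_k^{A_i}=M_1$ continues to hold after the replacements (the fixed share absorbs whatever the replaced shares contribute). Since $A_i$ holds exactly the shares $\{S_k^{A_i}:P_k\in A_i\}$, each of $A_1,A_2,A_3$ obtains $M_1$ by adding its $A_i$-scheme shares; for instance $A_3$ computes $S_2^{A_3}+S_3^{A_3}=M_1$.

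With $M_1$ in hand, I would recover each remaining message from the single replaced share that carries it, namely the $S_j^{A_i}=2M_1+M_\ell$ with $\ell\in\{2,3,4\}$: since the coefficient $b=1\neq0$, knowing the value of $S_j^{A_i}$ and $M_1$ yields $M_\ell=S_j^{A_i}-2M_1$. It therefore remains to argue that each subset can determine the value of every such $S_j^{A_i}$, and here I split into two cases governed by the replaceability condition under which $S_j^{A_i}$ was replaced: for the subset at hand $A'$, either $P_j\in A'$, in which case $A'$ holds $S_j^{A_i}$ outright, or $A_i\setminus\{P_j\}\subseteq A'$, in which case $A'$ holds every other share of the $A_i$-scheme and reconstructs the missing one via $S_j^{A_i}=M_1-\sum_{P_k\in A_i\setminus\{P_j\}}S_k^{A_i}$. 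Concretely, $A_3$ holds all three replaced shares directly, while $A_1$ recovers $S_3^{A_2}$ from $S_1^{A_2},S_4^{A_2}$, and $A_2$ recovers $S_2^{A_1}$ and $S_2^{A_3}$ from $\{S_1^{A_1},S_4^{A_1}\}$ and $\{S_3^{A_3}\}$ respectively.

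I expect the indirect case to be the only real obstacle: when a subset does not physically hold the replaced share, it must still reconstruct its value, and this is exactly what the replaceability condition (step~1 of the construction) was engineered to guarantee. Verifying that the condition indeed forces $A_i\setminus\{P_j\}\subseteq A'$ whenever $P_j\notin A'$, and hence that the reconstruction identity above is available, is the crux; the remaining work is bookkeeping over the three subsets.
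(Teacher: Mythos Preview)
Your proposal is correct and follows essentially the same approach as the paper: recover $M_1$ by summing the shares of the relevant $A_i$-scheme, then recover each $M_\ell$ from the replaced share $2M_1+M_\ell$, either held directly or reconstructed via $S_j^{A_i}=M_1-\sum_{k\neq j}S_k^{A_i}$. The only difference is presentational: the paper does a bare explicit computation for each of the three subsets, whereas you organize the argument around the two cases of the replaceability condition---this is exactly the structure of the paper's general Lemma~\ref{lem: mainDecode}, so you have effectively anticipated the general proof and specialized it here.
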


\begin{proof}
    $\mathbf{\{P_1,P_2,P_4\}}$: For this authorized subset, 
    $S_A=\{S_1^{A_1},S_1^{A_2},S_2^{A_1},S_2^{A_3},M_1-S_1^{A_1}-S_2^{A_1},
        M_1-S_1^{A_2}-S_3^{A_2}\}.$
    Since $M_1-S_1^{A_1}-S_2^{A_1}+S_1^{A_1}+S_2^{A_1}=M_1$, it follows that $S_2^{A_3}-2M_1=M_2$, $S_2^{A_1}-2M_1=M_3$, and $-(M_1-S_1^{A_2}-S_3^{A_2})-S_1^{A_2}-M_1=M_4.$
    Thus, by Lemma \ref{lem: 2.1}, $\mathrm{H}(M_1,M_2,M_3,M_4|S_A)=0$.

    $\mathbf{\{P_1,P_3,P_4\}}$: For this authorized subset, 
    $S_A=\{S_1^{A_1},S_1^{A_2},S_3^{A_2},M_1-S_2^{A_3},M_1-S_1^{A_1}-S_2^{A_1},
        M_1-S_1^{A_2}-S_3^{A_2}\}.$
    Since $M_1-S_1^{A_2}-S_3^{A_2}+S_1^{A_2}+S_3^{A_2}=M_1$, it follows that $-(M_1-S_1^{A_1}-S_2^{A_1})-M_1-S_1^{A_1}=M_2$, $-(M_1-S_2^{A_3})-M_1=M_3$, and $S_3^{A_2}-2M_1=M_4$. Thus, by Lemma \ref{lem: 2.1} $\mathrm{H}(M_1,M_2,M_3,M_4|S_A)=0$.

    $\mathbf{\{P_1,P_2\}}$: For this authorized subset, $S_A=\{S_2^{A_1},S_2^{A_3},S_3^{A_2},M_1-S_2^{A_3}\}.$ Since $M_1-S_2^{A_3}+S_2^{A_3}=M_1$ it follows that $S_2^{A_1}-2M_1=M_2$, $S_2^{A_3}-2M_1=M_3$, and $S_3^{A_2}-2M_1=M_4$. Thus by Lemma \ref{lem: 2.1}, $\mathrm{H}(M_1,M_2,M_3,M_4|S_A)=0$.
\end{proof}

\begin{prop}[Individual Security]
    $\mathrm{I}(M_\ell;S_U)=0$ for $\ell\in\{1,\ldots,4\}$ where $S_U$ is the set of shares held by unauthorized subset, i.e. no unauthorized subset learns any new information about any of the 4 messages.
\end{prop}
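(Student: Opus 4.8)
The plan is to reduce the claim to a small, finite check. First I would exploit the monotonicity of individual security: whenever $U'\subseteq U$ we have $S_{U'}\subseteq S_U$, so $S_{U'}$ is a deterministic function of $S_U$ and the data-processing inequality gives $\mathrm{I}(M_\ell;S_{U'})\le \mathrm{I}(M_\ell;S_U)$. Since mutual information is nonnegative, it therefore suffices to verify $\mathrm{I}(M_\ell;S_U)=0$ only for the \emph{maximal} unauthorized subsets. Because every three-participant subset already contains some $A_i$ (each of $\{P_1,P_2,P_3\}$ and $\{P_2,P_3,P_4\}$ contains $A_3=\{P_2,P_3\}$, while $\{P_1,P_2,P_4\}=A_1$ and $\{P_1,P_3,P_4\}=A_2$), the maximal unauthorized subsets are exactly the five two-participant sets $\{P_1,P_2\}$, $\{P_1,P_3\}$, $\{P_1,P_4\}$, $\{P_2,P_4\}$, and $\{P_3,P_4\}$; the remaining pair $\{P_2,P_3\}=A_3$ is authorized and excluded.

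The second step is to set up a linear-algebraic criterion. After substituting the fixed shares $S_4^{A_1}=M_1-S_1^{A_1}-S_2^{A_1}$, $S_4^{A_2}=M_1-S_1^{A_2}-S_3^{A_2}$, $S_3^{A_3}=M_1-S_2^{A_3}$ and the replacements $S_2^{A_1}=2M_1+M_2$, $S_2^{A_3}=2M_1+M_3$, $S_3^{A_2}=2M_1+M_4$, every one of the eight shares is an $\mathbb{F}_q$-linear form in the six independent uniform source variables $M_1,M_2,M_3,M_4$ and the two leftover random shares $S_1^{A_1},S_1^{A_2}$. For linear forms of independent uniform sources over $\mathbb{F}_q$, $\mathrm{I}(M_\ell;S_U)=0$ holds precisely when the coordinate form $M_\ell$ does not lie in the $\mathbb{F}_q$-span of the forms constituting $S_U$: if it does lie in the span then $M_\ell$ is a deterministic function of $S_U$ and the information is positive, whereas if it does not then $M_\ell$ together with a basis of that span are linearly independent forms, hence jointly uniform, hence $M_\ell$ is independent of $S_U$.

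The third step is the finite verification itself: for each of the five subsets $U$ and each $\ell\in\{1,2,3,4\}$, write out the two to four forms comprising $S_U$ and check that no linear combination of them equals $M_\ell$, i.e. a small rank/consistency check of a linear system over $\mathbb{F}_q$. Equivalently, one can phrase each case as a one-time-pad masking argument: in each subset either $M_\ell$ does not appear at all (for instance $M_4$ is absent from the shares of $\{P_1,P_2\}$), or it appears only inside a sum $2M_1+M_\ell$ whose partner $M_1$ is an independent uniform mask that an unauthorized subset can never fully recover, or it is masked by one of the fresh uniform shares $S_1^{A_1},S_1^{A_2}$.

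I expect the main obstacle to be precisely the feature distinguishing \emph{individual} security from perfect security: an unauthorized subset generally does learn some joint information about the messages. For example $\{P_1,P_2\}$ can form $(2M_1+M_2)-(2M_1+M_3)=M_2-M_3$, so one cannot simply argue that $S_U$ is independent of the entire message tuple. The delicate point is to confirm that no combination of the subset's shares cancels the masking terms so as to isolate a single $M_\ell$; this is exactly where the restriction $a\notin\{0,1\}$ (here $a=2$, under the standing assumption that $2\neq0$ in $\mathbb{F}_q$) enters, ensuring that an ``authorized-minus-one'' set such as $A_i\setminus\{P_j\}$ cannot strip off the $M_1$-mask and expose $M_\ell$. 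Verifying this nonisolation across all five subsets and all four messages is the crux, and the linear-span criterion above is what makes it a routine rather than an ad hoc check.
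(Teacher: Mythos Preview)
Your proposal is correct and follows essentially the same route as the paper: reduce to the five maximal unauthorized two-participant subsets, express each share as an $\mathbb{F}_q$-linear form in the six independent uniform sources $M_1,\dots,M_4,S_1^{A_1},S_1^{A_2}$, and then verify a rank condition case by case. The only cosmetic difference is that the paper phrases the rank check as ``the representative matrix $\mathcal{A}$ of $S_U$ and the matrix $\mathcal{A}_\ell$ obtained by zeroing the $M_\ell$-column are both full rank, hence $\mathrm{H}(S_U)=\mathrm{H}(S_U|M_\ell)$,'' whereas you phrase it as ``the coordinate form $M_\ell$ is not in the row span of $\mathcal{A}$''; these are equivalent, and your formulation is arguably more direct since it avoids computing two entropies separately.
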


\begin{proof}
    It is sufficient to consider only the maximal unauthorized subsets which are, for this example, $\{P_1,P_2\}$,$\{P_1,P_3\}$,$\{P_1,P_4\}$,$\{P_2,P_4\}$, and $\{P_3,P_4\}$. We prove the proposition for $\{P_1,P_2\}$ since analogous arguments work for the other unauthorized subsets.
    For this subset, $S_U=\{S_1^{A_1},S_1^{A_2},S_2^{A_1},S_2^{A_3}\}$. By the definition of mutual information,
    $\mathrm{I}(M_1;S_U)=\mathrm{H}(S_U)-\mathrm{H}(S_U|M_1).$
    
    Note that $S_U$ can be represented by the following matrix equation:
    \begin{equation*}
        \mathcal{A}
        \left(\begin{IEEEeqnarraybox*}[][c]{,c,}
            M_1\\
            M_2\\
            M_3\\
            M_4\\
            S_1^{A_1}\\
            S_1^{A_2}
        \end{IEEEeqnarraybox*}\right)=
        \left(\begin{IEEEeqnarraybox*}[][c]{,c/c/c/c/c/c,}
            0&0&0&0&1&0\\
            0&0&0&0&0&1\\
            2&1&0&0&0&0\\
            2&0&1&0&0&0
        \end{IEEEeqnarraybox*}\right)
        \left(\begin{IEEEeqnarraybox*}[][c]{,c,}
            M_1\\
            M_2\\
            M_3\\
            M_4\\
            S_1^{A_1}\\
            S_1^{A_2}
        \end{IEEEeqnarraybox*}\right)=
        \left(\begin{IEEEeqnarraybox*}[][c]{,c,}
            S_1^{A_1}\\
            S_1^{A_2}\\
            2M_1+M_2\\
            2M_1+M_3
        \end{IEEEeqnarraybox*}\right)
    \end{equation*}
    Since $\mathcal{A}$ is full rank and $S_1^{A_1},S_1^{A_2},M_1,M_2,M_3,M_4$ all follow a uniform distribution, Lemmas \ref{lem: 2.4} and \ref{lem: 2.3} give that $\mathrm{H}(S_U)=4\log_2{q}$.
    
    Now, by Lemma \ref{lem: 2.2}, $\mathrm{H}(S_U|M_1)=\mathrm{H}(S_1^{A_1},S_1^{A_2},M_2,M_3)$. Thus, $S_U|M_1$ can be represented by the following:
    \begin{equation*}
        \mathcal{A}_1
        \left(\begin{IEEEeqnarraybox*}[][c]{,c,}
            M_1\\
            M_2\\
            M_3\\
            M_4\\
            S_1^{A_1}\\
            S_1^{A_2}
        \end{IEEEeqnarraybox*}\right)=
        \left(\begin{IEEEeqnarraybox*}[][c]{,c/c/c/c/c/c,}
            0&0&0&0&1&0\\
            0&0&0&0&0&1\\
            0&1&0&0&0&0\\
            0&0&1&0&0&0
        \end{IEEEeqnarraybox*}\right)
        \left(\begin{IEEEeqnarraybox*}[][c]{,c,}
            M_1\\
            M_2\\
            M_3\\
            M_4\\
            S_1^{A_1}\\
            S_1^{A_2}
        \end{IEEEeqnarraybox*}\right)=
        \left(\begin{IEEEeqnarraybox*}[][c]{,c,}
            S_1^{A_1}\\
            S_1^{A_2}\\
            M_2\\
            M_3
        \end{IEEEeqnarraybox*}\right)
    \end{equation*}
    Since $\mathcal{A}_1$ is full rank we have that $\mathrm{H}(S_U|M_1)=4\log_2{q}$ by Lemmas \ref{lem: 2.3} and \ref{lem: 2.4}. 

    Thus, \begin{align*}
        \mathrm{I}(M_1;S_U)&= \mathrm{H}(S_U)-\mathrm{H}(S_U|M_1)\\
        &= 4\log_2{q}-4\log_2{q}\\
        &=0.
    \end{align*}
    Showing $\mathrm{I}(M_\ell;S_U)=0$ for $\ell\in\{2,3,4\}$ involves similar arguments.
\end{proof}

Since we now have $4$ messages and we have not increased the number of shares, the information rate using our construction for this example is $\mathcal{R}_{MS}=\frac{4}{8}=\frac{1}{2}$, an improvement over the information rate $\mathcal{R}_{SS}=\frac{1}{8}$ of the single-secret sharing scheme.

\section{Main Results}

In this section, we show how to convert any single-secret sharing scheme with a general access structure into a multi-secret sharing scheme where every message has the same access structure. Given a single-secret sharing scheme with a monotone circuit, $(\mathcal{P},\Gamma_0, S)$, we define a replaceable share as the following.

\begin{definition}\label{def: 5}
    A share $S_j^A$ is \emph{replaceable} if, for every authorized subset $A'\in\Gamma_0$, either the participant $P_j\in A'$ or the subset $A\setminus P_j\subseteq A'$.
\end{definition}

Then, as shown in Section~\ref{sec: example}, our method for replacing shares consists in the following algorithm.

\begin{algorithm}[Replacement Algorithm]\label{alg: rep} Given a monotone circuit construction, we perform the following steps.
    \begin{enumerate}
        \item Identify replaceable shares $S_j^{A}$ according to Definition~\ref{def: 5}.
        \item For each authorized subset $A\in\Gamma_0$, if the share $S_j^A$ is replaceable for every participant $P_j\in A$, pick any such share to be fixed. Otherwise, fix a non-replaceable $S_j^A$.
        \item Replace every share $S_j^A$ that is replaceable but not fixed with a linear combination of messages of the form $aM_1+bM_\ell$ where $a\notin{0,1}$ and $b\neq0$ for $\ell\in\{2,\ldots,m\}$. Each $M_\ell$ is allowed to appear in exactly one replacement.
    \end{enumerate}
\end{algorithm}

We now present in Theorems~\ref{teo: construction} and~\ref{teo: nonRep} the main results of this paper for an individually secure multi-secret sharing scheme.

\begin{theorem} \label{teo: construction}
    Let $(\mathcal{P}, \Gamma_0, S)$ be a single-secret sharing scheme with a monotone circuit. Then, by applying Algorithm~\ref{alg: rep}, we construct an individually secure multi-secret sharing scheme that achieves an information rate of $\mathcal{R}_{MS}=\frac{m}{|S|}$ where $m-1$ is the number of replaced shares.
\end{theorem}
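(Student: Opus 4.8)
The plan is to verify the three requirements of a multi-secret sharing scheme in turn. The rate is immediate: Algorithm~\ref{alg: rep} neither creates nor deletes shares, so the scheme still has $|S|$ shares while now carrying $m$ messages, giving $\mathcal{R}_{MS}=m/|S|$. The substance is decodability and individual security.

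For \textbf{decodability}, by monotonicity it suffices to treat a minimal authorized set $A'\in\Gamma_0$. Every subset $A$ still satisfies the defining identity $\sum_{P_k\in A}S_k^A=M_1$, because the fixed share of $A$ is, by construction, $M_1$ minus its (possibly replaced) remaining shares; hence $A'$ recovers $M_1$ by adding its own shares. A new message $M_\ell$ enters the scheme only through the single replaced share $S_{j_\ell}^{A_{i_\ell}}=aM_1+bM_\ell$, so I would invoke Definition~\ref{def: 5}: replaceability of $S_{j_\ell}^{A_{i_\ell}}$ forces, for this $A'$, either $P_{j_\ell}\in A'$ or $A_{i_\ell}\setminus\{P_{j_\ell}\}\subseteq A'$. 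In the first case $A'$ holds $S_{j_\ell}^{A_{i_\ell}}$ outright and, knowing $M_1$, solves for $M_\ell$ (using $b\neq0$); in the second case $A'$ holds every other share of $A_{i_\ell}$, whose sum equals $M_1-S_{j_\ell}^{A_{i_\ell}}=(1-a)M_1-bM_\ell$, from which $M_\ell$ again follows. This is exactly the dichotomy illustrated in the example, now stated for general $A'$.

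For \textbf{security} I would first record a \emph{reparametrization} observation: in the original single-secret scheme each replaceable non-fixed share is an independent uniform random variable, and replacing it by $aM_1+bM_\ell$ (with $b\neq0$) is an invertible linear change of the underlying independent uniform sources that fixes $M_1$. Consequently the joint distribution of the full share vector is identical to that of the original scheme, so $\mathrm{I}(M_1;S_U)=0$ for every unauthorized $U$ is inherited directly from the security of the monotone circuit construction. It then remains to prove $\mathrm{I}(M_\ell;S_U)=0$ for $\ell\geq2$ and every maximal unauthorized $U$. The structural fact I would exploit is that $M_\ell$ occurs in exactly two shares—the replaced share $S_{j_\ell}^{A_{i_\ell}}$ and the fixed share of $A_{i_\ell}$—so any linear combination of $S_U$ isolating $M_\ell$ must use one of these. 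Writing such a hypothetical combination and substituting $S_{j_\ell}^{A_{i_\ell}}=aM_1+bM_\ell$, I would rearrange it into a combination of shares in $S_U$ equal to a nonzero multiple of $M_1$, contradicting the $M_1$-security just established; here $a\neq0$ and $a\neq1$ are precisely what keep the $M_1$-coefficient from vanishing, mirroring why $a=1$ breaks the example. The one configuration where this rearrangement leaves a share of $A_{i_\ell}$ outside $S_U$ is handled separately: that missing share (whether a free random share or another replaced share) then reaches $S_U$ \emph{only} through the fixed share of $A_{i_\ell}$, so its source cannot be cancelled, forcing the coefficient on the fixed share to be zero and collapsing the combination.

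The main obstacle is this last security step. Unlike decodability it is not a single clean case but a short case analysis on whether $P_{j_\ell}\in U$ and on how much of $A_{i_\ell}$ lies in $U$, and it must simultaneously use $M_1$-security, the conditions $a\notin\{0,1\}$ and $b\neq0$, and the fact that each source attached to $A_{i_\ell}$ reaches $S_U$ through at most two shares. Equivalently, one can phrase the entire security argument through the entropy--rank method of the example as the identity $\mathrm{rank}(B)=\mathrm{rank}(B_{\hat\ell})$, where $B$ encodes $S_U$ as linear forms in the independent sources and $B_{\hat\ell}$ deletes the $M_\ell$-column; the content is showing that the $M_\ell$-column lies in the span of the remaining columns, which is exactly what the case analysis establishes concretely.
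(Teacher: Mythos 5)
Your rate computation and decodability argument coincide with the paper's own (Lemma~\ref{lem: mainDecode}): the same dichotomy on whether $P_{j_\ell}\in A'$ or $A_{i_\ell}\setminus\{P_{j_\ell}\}\subseteq A'$, with $M_1$ recovered first from the gate sum, so that part is not new. Where you genuinely diverge is security. The paper (Appendix~\ref{proof:lem_security}) argues by induction on the number of replacements: it writes $S_U$ as a representative matrix over the independent sources, notes that each replacement acts by invertible column operations ($C_1\mapsto C_1+aC_m$, $C_m\mapsto bC_m$), so the matrices for $S_U$ and $S_U$ given $M_\ell$, $\ell\le m-1$, stay full rank, and then shows by contradiction that zeroing the new $M_m$-column also preserves full rank. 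You instead avoid the induction with two ingredients: the reparametrization observation, under which the joint law of $(M_1,S)$ is literally unchanged by all the replacements (so $M_1$-security is inherited verbatim from the construction in \cite{stinson}), followed by a reduction of $M_\ell$-security to $M_1$-security using the fact that the source $M_\ell$ enters exactly two shares of $A_{i_\ell}$. Your version has the merit of making explicit where each hypothesis $a\neq 0$, $a\neq 1$, $b\neq 0$ is actually used, and of localizing the argument to the single gate $A_{i_\ell}$; its cost is that the step you describe as ``handled separately'' is really a three-way case split on $U\cap A_{i_\ell}$ (the replaced share lies in $S_U$; the fixed share of $A_{i_\ell}$ is absent from $S_U$; the fixed share is present but some sibling share of $A_{i_\ell}$ is absent) that must be written out in full, together with the standard observation that for linear shares $\mathrm{I}(M_\ell;S_U)\in\{0,\log_2 q\}$, so that a security failure is equivalent to $M_\ell$ being an \emph{exact} linear combination of the shares in $S_U$ --- your closing rank identity is what makes that equivalence rigorous, so keep it in the final write-up rather than relying on the informal phrase ``isolating $M_\ell$.'' Both routes reach the theorem; yours is the more transparent and arguably the more self-contained of the two.
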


The proof of Theorem~\ref{teo: construction} relies on the following two Lemmas~\ref{lem: mainDecode} and~\ref{lem: mainSecure} to show that the new scheme is decodable and individually secure. 

\begin{lemma}[Decodability]\label{lem: mainDecode}
    Let $S_A$ be the set of shares held by an authorized subset $A$ after applying Algorithm~\ref{alg: rep}. For all $A\in\Gamma_0$, $\mathrm{H}(M_\ell|S_A)=0$ for any $\ell$. In other words, all authorized subsets can compute all messages.
\end{lemma}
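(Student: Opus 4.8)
The plan is to show that every minimal authorized subset $A\in\Gamma_0$ can reconstruct each message as a deterministic function of its shares $S_A$; since this is exactly what the lemma asks (and by monotonicity it would anyway extend to all authorized subsets, as a superset can simply use the shares of a contained minimal subset), establishing it for $A\in\Gamma_0$ suffices. The argument proceeds in two stages: first recover $M_1$, then bootstrap from $M_1$ to recover every replaced message $M_\ell$.

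First I would observe that any $A\in\Gamma_0$ can compute $M_1$. The shares associated with $A$ are $\{S_j^A : P_j\in A\}$, each held by the corresponding participant $P_j\in A$, so all of them lie in $S_A$. By construction the fixed share of $A$ equals $M_1-\sum_{P_k\in A\setminus\{P_j\}}S_k^A$, so regardless of which of the remaining shares associated with $A$ have been replaced by linear combinations, the identity $\sum_{P_j\in A}S_j^A=M_1$ still holds. Summing the shares of $A$ associated with $A$ therefore yields $M_1$.

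Next, fix any $\ell\in\{2,\ldots,m\}$. By Algorithm~\ref{alg: rep}, $M_\ell$ is introduced through a single replacement: some replaceable, non-fixed share $S_j^{A_i}$ is set to $aM_1+bM_\ell$ with $b\neq0$. I claim $A$ can obtain the value $aM_1+bM_\ell$, after which $M_\ell=b^{-1}\bigl((aM_1+bM_\ell)-aM_1\bigr)$ is computable, since $A$ already knows $M_1$ and $b$ is invertible in $\mathbb{F}_q$. To obtain the value, I apply Definition~\ref{def: 5} to the replaceable share $S_j^{A_i}$ with the authorized subset $A'=A$: either $P_j\in A$, in which case $A$ holds $S_j^{A_i}$ directly (it is $P_j$'s share associated with $A_i$); or $A_i\setminus P_j\subseteq A$, in which case $A$ holds every share $S_k^{A_i}$ with $P_k\in A_i\setminus\{P_j\}$, and can recover $S_j^{A_i}=M_1-\sum_{P_k\in A_i\setminus\{P_j\}}S_k^{A_i}$ using its knowledge of $M_1$. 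In both cases $A$ recovers $aM_1+bM_\ell$, hence $M_\ell$.

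Combining the two stages, every message $M_1,\ldots,M_m$ is a deterministic function of $S_A$ for each $A\in\Gamma_0$, giving $\mathrm{H}(M_\ell|S_A)=0$ for every $\ell$. The only delicate point, and the one I would state most carefully, is the second case above: one must verify that the sum of the shares associated with $A_i$ equals $M_1$ even after replacements (again guaranteed by the fixed share of $A_i$) and that every summand other than $S_j^{A_i}$ genuinely lies in $S_A$ under the hypothesis $A_i\setminus P_j\subseteq A$. This is precisely where the replaceability condition of Definition~\ref{def: 5} is invoked, so the heart of the proof is recognizing that the condition is tailored to make exactly this reconstruction succeed; the field arithmetic around $b\neq0$ is then routine.
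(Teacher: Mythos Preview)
Your proposal is correct and follows essentially the same two-stage argument as the paper: first recover $M_1$ from the identity $\sum_{P_j\in A}S_j^A=M_1$, then split on the two cases of Definition~\ref{def: 5} to recover the replaced share $S_j^{A_i}=aM_1+bM_\ell$ and invert via $b\neq 0$. If anything, you are slightly more explicit than the paper in noting that the summation identity for $A_i$ persists after replacements, which is exactly what is needed in the second case.
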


\begin{proof}
    We prove Lemma~\ref{lem: mainDecode} by showing that every message can be written as a linear combination of the shares held by an authorized subset.  By construction, $M_1$ is equal to the sum of the shares $S_j^A\in S_A$, i.e., $M_1=\sum_{P_j\in A}{S_j^A}$. Thus, $\mathrm{H}(M_1|S_A)=0$ for all $A$ by Lemma \ref{lem: 2.1}.
    
    For the remaining messages $M_\ell$, $\ell\in\{2,\ldots,m\}$, there exists a participant $P_j\in A_i$ such that $S_j^{A_i}=aM_1+bM_\ell$. For each authorized subset $A$, there are two cases to consider.
    
    \textbf{Case 1 ($P_j\in A$):} In this case, $S_j^{A_i}\in S_A$. Then, since we've already shown $A$ can compute $M_1$, we can write $M_\ell$ as a linear combination of $S_j^{A_i}$ and $M_1$. More explicitly, \\$b^{-1}(S_j^{A_i}-aM_1)=b^{-1}(aM_1+bM_\ell-aM_1)=M_\ell.$ Thus, $\mathrm{H}(M_\ell|S_A)=0$ by Lemma \ref{lem: 2.1}.
    
    \textbf{Case 2 ($P_j\notin A$):} Since $P_j\notin A$, then $A_i\setminus \{P_j\}\subseteq A$ by our replaceability condition. Let $S'=S_{A_i}\setminus S_{P_j}$ be the set of shares held by $A_i\setminus \{P_j\}$. Then $S'\subseteq S_A$. By construction, we can recover $S_j^{A_i}$ as a linear combination of the shares $S_t^{A_i}\in S'$ for $t\neq j$, i.e., $S_j^{A_i}=M_1-\sum_{t\neq j}{c_tS_t^{A_i}}$ for some constants $c_t\in\mathbb{F}_q$. Then, as in Case 1, we can write $M_\ell$ as a linear combination of $S_j^{A_i}$ and $M_1$.  Thus, $\mathrm{H}(M_\ell|S_A)=0$ by Lemma \ref{lem: 2.1}.
\end{proof}

\begin{lemma}[Security]\label{lem: mainSecure}
    After applying Algorithm~\ref{alg: rep}, no unauthorized subset gains any new information about any individual message, i.e. $\mathrm{I}(M_\ell;S_U)=0$ for $\ell\in\{1,\ldots,m\}$ where $S_U=\bigcup_{P_j\in U}{S_{P_j}}$ is the set of shares held by an unauthorized subset.
\end{lemma}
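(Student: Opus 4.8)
The plan is to follow the template of the worked example and generalize each step. First I would reduce to maximal unauthorized subsets $U$, since establishing $\mathrm{I}(M_\ell;S_U)=0$ for every maximal $U$ immediately gives it for all unauthorized subsets by monotonicity of mutual information. Fixing such a $U$ and an index $\ell$, I write every share in $S_U$ as a linear form in the uniform, independent variables $\mathbf{x}=(M_1,\ldots,M_m,\text{free shares})$, where the free shares are those left uniform by the algorithm, so that $S_U=\mathcal{A}\mathbf{x}$ for a matrix $\mathcal{A}$ over $\mathbb{F}_q$. Using $\mathrm{I}(M_\ell;S_U)=\mathrm{H}(S_U)-\mathrm{H}(S_U\mid M_\ell)$ together with Lemmas~\ref{lem: 2.3} and~\ref{lem: 2.4}, each entropy equals $(\log_2 q)$ times the rank of the relevant matrix; hence it suffices to prove $\mathrm{rank}(\mathcal{A})=\mathrm{rank}(\mathcal{A}_\ell)=|S_U|$, where $\mathcal{A}_\ell$ is $\mathcal{A}$ with the $M_\ell$-column deleted (the effect of conditioning on $M_\ell$, via Lemma~\ref{lem: 2.2}). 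Equivalently, the whole lemma reduces to the single linear-algebraic statement that no linear combination of the shares in $S_U$ equals $M_\ell$; that is, the functional ``read off $M_\ell$'' does not lie in the row space of $\mathcal{A}$.

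Next I would establish that $\mathcal{A}$ has full row rank $|S_U|$, so that $\mathrm{H}(S_U)=|S_U|\log_2 q$. This rests on the independence built into the monotone circuit construction: the shares attached to distinct minimal authorized subsets $A_i$ use disjoint fresh randomness and no share is assigned twice, while each new message $M_\ell$ ($\ell\ge 2$) is, by Step~3 of Algorithm~\ref{alg: rep}, introduced in exactly one replacement. The free shares contribute standard-basis rows, the replacements contribute rows carrying pairwise-distinct fresh messages, and the fixed shares are coupled only through $M_1$; a short elimination argument then shows the rows are linearly independent.

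The heart of the proof, and its main obstacle, is the remaining claim that $U$ cannot linearly reconstruct $M_\ell$. I would prove this in two stages. First, for $M_1$: in any combination of $U$'s shares equal to a pure multiple of $M_1$, every fresh message $M_\ell$ must cancel, and since each such $M_\ell$ occurs only in its replacement $S_j^{A_i}=aM_1+bM_\ell$ (coefficient $b$) and in the paired fixed share of $A_i$ (coefficient $-b$, because the replacement is a summand of that fixed share), this cancellation reduces the combination to one among the original single-secret shares held by $U$; as $U$ is unauthorized, the security of the original scheme forbids recovering $M_1$. Second, for $M_\ell$ with $\ell\ge 2$: since $M_\ell$ enters $S_U$ only through $S_j^{A_i}$ and the fixed share of $A_i$, any reconstruction of $M_\ell$ could be combined with the subset-sum identity $\sum_{P_t\in A_i}S_t^{A_i}=M_1$ to yield a nonzero multiple of $M_1$ --- through $a\neq 0$ when $U$ holds the replacement $S_j^{A_i}$, or through $\sum_{P_t\in A_i\setminus\{P_j\}}S_t^{A_i}=(1-a)M_1-bM_\ell$ and $a\neq 1$ when $U$ holds the complementary shares --- contradicting the first stage. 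This is exactly where the constraints $a\notin\{0,1\}$ and $b\neq 0$ from Step~3 are used.

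The delicate point I expect to fight with is the combinatorial bookkeeping of precisely which shares of each $A_i$ the unauthorized subset $U$ possesses, since this determines which of the two routes above is actually available. Here the replaceability criterion of Definition~\ref{def: 5}, together with the Algorithm's rule to fix a non-replaceable share whenever one exists (Step~2), is what guarantees that the shares needed to run the argument are present in $S_U$; this is the security counterpart of the two-case split ($P_j\in A$ versus $P_j\notin A$) already seen in the proof of Lemma~\ref{lem: mainDecode}. Making the $M_1$ reduction fully rigorous --- matching combinations in the new scheme to combinations in the original scheme despite their differing distributions --- is the other place where care will be required.
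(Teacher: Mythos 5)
Your plan is sound, but it takes a genuinely different route from the paper's proof. The two arguments share the same outer frame: write $S_U$ as $\mathcal{A}\mathbf{x}$ in the uniform independent variables (messages plus remaining free shares), compute $\mathrm{H}(S_U)$ and $\mathrm{H}(S_U\mid M_\ell)$ as rank times $\log_2 q$ via Lemmas~\ref{lem: 2.2}--\ref{lem: 2.4}, and reduce individual security to a rank equality. From there the paper proceeds by induction on the number of replaced shares: the base case is the security of the underlying monotone circuit scheme, and the inductive step observes that the replacement $S_{m-1}\mapsto aM_1+bM_m$ acts on the representative matrices by elementary column operations ($C_1\leftarrow C_1+aC_m$ and $C_m\leftarrow bC_m$), which preserve the full rank of $\mathcal{A}$ and of $\mathcal{A}_\ell$ for $\ell\le m-1$; full rank of the new $\mathcal{A}_m$ is then argued by contradiction. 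You instead argue directly: any linear reconstruction of $M_1$ by $U$ is pulled back, through the injective substitution of replacements for free shares (each fresh $M_\ell$ appears only in its replacement and in the paired fixed share, with coefficients $b$ and $-b$), to a reconstruction in the original scheme, contradicting its security; and any reconstruction of $M_\ell$ with $\ell\ge2$ is converted into a reconstruction of $aM_1$ or of $(1-a)M_1$ via the subset-sum identity of $A_i$. Your route has the virtue of showing explicitly where $a\neq0$ and $a\neq1$ enter the security argument (the paper motivates $a\neq1$ in the main text by an explicit attack, but its appendix induction never visibly invokes it) and of anchoring everything in the security of the single-secret scheme; the price is the case analysis over which shares of $A_i$ the set $U$ holds, which you correctly flag. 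That residual case --- $U$ holds the fixed share of $A_i$ but neither the replacement $S_j^{A_i}$ nor all of $S_{A_i}\setminus S_{P_j}$ --- does close: an uncancellable free share (or foreign message) of a missing participant forces the coefficient on the fixed share to vanish, so $M_\ell$ cannot be read off at all and there is nothing to refute. The paper's induction avoids this bookkeeping but hides the crux in a terser final contradiction step.
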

{\em Proof:} The proof is given in Appendix~\ref{proof:lem_security}.

The proof of individual security given in Appendix~\ref{proof:lem_security} is done using an inductive argument where the base case follows from the underlying secret sharing scheme. The inductive step is proved by showing that the representative matrices for $S_U$ and $S_U|M_\ell$ for $\ell\in\{1,\ldots,m-1\}$ remain full rank after the replacement since they are obtained by elementary column operations. To finish the inductive step, we show that the new representative matrix for $S_U|M_m$ is also full rank. This is done by way of contradiction. Thus, when $m-1$ shares are replaced, the information rate for our construction of a multi-secret sharing scheme is $\mathcal{R}_{MS}=\frac{m}{|S|}$.

Finally, we show that if a share $S_j^A$ not satisfying Definition \ref{def: 5} is replaced, then the decodability of the scheme fails. Thus, our replacement procedure can only be applied to replaceable shares.

\begin{theorem}\label{teo: nonRep}
    Suppose $P_j$'s share $S_j^{A_i}$ is not replaceable. If we make the replacement $S_j^{A_i}=aM_1+bM_\ell$ where $a\notin{0,1}$ and $b\neq0$, then $\mathrm{H}(M_\ell|S_A)\neq0$ where $S_A$ is the set of shares held by the participants in $A$. In other words, replacing a share that does not satisfy Definiton~\ref{def: 5} results in some authorized subset being unable to compute $M_\ell$.
\end{theorem}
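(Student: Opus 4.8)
The plan is to prove the statement by \emph{exhibiting} the failing subset: I negate Definition~\ref{def: 5} to produce a concrete authorized subset and then show it cannot reconstruct $M_\ell$. By the negation of replaceability for $S_j^{A_i}$, there is some $A'\in\Gamma_0$ with $P_j\notin A'$ \emph{and} $A_i\setminus\{P_j\}\not\subseteq A'$; the second condition hands me a participant $P_t\in A_i$, $P_t\neq P_j$, with $P_t\notin A'$. Thus $A'$ omits at least the two members $P_j$ and $P_t$ of $A_i$, and this ``two missing shares'' fact is exactly what will leave a free coordinate masking $M_\ell$. Throughout I take $S_j^{A_i}$ to be a non-fixed share, since the replacement operation of Algorithm~\ref{alg: rep} acts on non-fixed shares, so $A_i$ has a distinct fixed share $S_f^{A_i}=M_1-\sum_{k\neq f}S_k^{A_i}$ with $f\neq j$.

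Next I localize $M_\ell$. Because each new message is introduced by exactly one replacement, $M_\ell$ enters the scheme only through $S_j^{A_i}=aM_1+bM_\ell$; and since the $A_i$-subscheme must still sum to $M_1$, it reappears, with coefficient $-b$, inside the fixed share $S_f^{A_i}$. Hence $S_j^{A_i}$ and $S_f^{A_i}$ are the \emph{only} two shares in the entire scheme whose expansion contains $M_\ell$ with nonzero coefficient, and since $b\neq0$, any information $A'$ could obtain about $M_\ell$ must come through one of these two shares.

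The heart of the argument is a masking step, which I would carry out by conditioning on all of the scheme's underlying independent uniform randomness except $M_\ell$ and one carefully chosen mask, invoking that conditioning only decreases entropy, i.e. $\mathrm{H}(M_\ell|S_{A'})\ge \mathrm{H}(M_\ell|S_{A'},W)$. I split on whether the fixed participant $P_f$ of $A_i$ lies in $A'$. \textbf{If $P_f\notin A'$}, then $A'$ holds neither $S_j^{A_i}$ (as $P_j\notin A'$) nor $S_f^{A_i}$, so no share of $A'$ mentions $M_\ell$; conditioning on everything but $M_\ell$ fixes $S_{A'}$ while $M_\ell$ stays uniform, giving $\mathrm{H}(M_\ell|S_{A'})\ge \mathrm{H}(M_\ell)=\log_2 q>0$. \textbf{If $P_f\in A'$}, then $P_t\neq P_f$ (since $P_t\notin A'$), so $S_t^{A_i}$ is neither the fixed nor the replaced share and therefore carries an underlying independent uniform component $R$ whose only carriers are $S_t^{A_i}\notin S_{A'}$ and $S_f^{A_i}$. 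Conditioning on all underlying randomness except $M_\ell$ and $R$, every share of $A'$ becomes constant except $S_f^{A_i}=c-bM_\ell-R$, so $S_{A'}$ depends on $(M_\ell,R)$ only through the single scalar $-bM_\ell-R$; as $R$ is uniform and independent of $M_\ell$, this scalar is independent of $M_\ell$, and again $\mathrm{H}(M_\ell|S_{A'})\ge\mathrm{H}(M_\ell)>0$.

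I expect the main obstacle to be making the localization and masking fully rigorous, rather than the case split itself. Concretely, one must argue from the independence of the per-subset schemes in the monotone circuit construction that no share outside the $A_i$-subscheme, and no admissible combination of $A'$'s shares, can supply $S_j^{A_i}$ beyond the information already carried by $M_1$; the two missing participants $P_j,P_t$ are exactly what guarantee a surviving free coordinate. A secondary subtlety to dispatch is that the second missing share $S_t^{A_i}$ may itself have been (legitimately) replaced by $a'M_1+b'M_{\ell'}$ rather than left as a raw random share; in that case the mask $R$ is the independent message $M_{\ell'}$ instead of a raw share, but since $M_{\ell'}$ is uniform and independent, the masking computation is identical.
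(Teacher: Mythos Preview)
Your proposal is correct and follows the same essential strategy as the paper: negate Definition~\ref{def: 5} to exhibit an authorized subset $A'\in\Gamma_0$ missing at least two members of $A_i$ (namely $P_j$ and some $P_t$), then argue that the missing share $S_t^{A_i}$ (or its underlying uniform variable) masks $M_\ell$ from $A'$'s view, giving $\mathrm{H}(M_\ell\mid S_{A'})=\log_2 q$.

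The paper's execution is more compact than yours. Rather than tracking the fixed index $f$ and splitting on $P_f\in A'$ versus $P_f\notin A'$, the paper works directly from the relation $\sum_{k\in A_i}S_k^{A_i}=M_1$: substituting $S_j^{A_i}=aM_1+bM_\ell$ and rearranging yields $M_\ell$ as a linear combination of $M_1$ and shares held by $A'$, minus $S_\alpha^{A_i}$ for a missing participant $P_\alpha$. One application of Lemma~\ref{lem: 2.2} then gives $\mathrm{H}(M_\ell\mid S_{A'})=\mathrm{H}(-S_\alpha^{A_i})=\log_2 q$. The location of the fixed share never enters, since the sum-to-$M_1$ identity holds regardless. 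What your more elaborate route buys is explicit bookkeeping of which shares actually carry $M_\ell$ and the mask, and you confront head-on a subtlety the paper leaves implicit: that the second missing share $S_t^{A_i}$ may itself have been replaced by $a'M_1+b'M_{\ell'}$, in which case the independent message $M_{\ell'}$ serves as the mask. The paper's one-line invocation of Lemma~\ref{lem: 2.2} relies on the independence of $S_\alpha^{A_i}$ from $S_{A'}$ without spelling out this case.
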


{\em Proof:} The proof is given in Appendix~\ref{proof:teo_nonRep}.

The proof of Theorem~\ref{teo: nonRep} given in Appendix~\ref{proof:teo_nonRep} follows from the fact that there exists an authorized subset that does not have $S_j^A$ and also does not have enough shares of $S_A\setminus S_{P_j}$ in order to reconstruct $S_j^A$.


\IEEEtriggeratref{25}

\bibliographystyle{IEEEtran}
\bibliography{citations}

\clearpage

\appendix

\off{\subsection{Proof of Lemma~\ref{lem: mainDecode}}
\begin{proof}
    By construction, $M_1$ is equal to the sum of the shares $S_j^A\in S_A$, i.e., $M_1=\sum_{P_j\in A}{S_j^A}$. Thus, $\mathrm{H}(M_1|S_A)=0$ for all $A$ by Lemma \ref{lem: 2.1}.
    For the remaining messages $M_\ell$, $\ell\in\{2,\ldots,m\}$, there exists a participant $P_j\in A_i$ such that $S_j^{A_i}=aM_1+bM_\ell$. For each authorized subset $A$, there are two cases to consider.
    \textbf{Case 1 ($P_j\in A$):} In this case, $S_j^{A_i}\in S_A$. Then, since we've already shown $A$ can compute $M_1$, we can write $M_\ell$ as a linear combination of $S_j^{A_i}$ and $M_1$. More explicitly,\[b^{-1}(S_j^{A_i}-aM_1)=b^{-1}(aM_1+bM_\ell-aM_1)=M_\ell.\] Thus, $\mathrm{H}(M_\ell|S_A)=0$ by Lemma \ref{lem: 2.1}.
    \textbf{Case 2 ($P_j\notin A$):} Since $P_j\notin A$, then $A_i\setminus \{P_j\}\subseteq A$ by our replaceability condition. Let $S'=S_{A_i}\setminus S_{P_j}$ be the set of shares held by $A_i\setminus \{P_j\}$. Then $S'\subseteq S_A$. By construction, we can recover $S_j^{A_i}$ as a linear combination of the shares $S_t^{A_i}\in S'$ for $t\neq j$, i.e., $S_j^{A_i}=M_1-\sum_{t\neq j}{c_tS_t^{A_i}}$ for some constants $c_t\in\mathbb{F}_q$. Then, as in Case 1, we can write $M_\ell$ as a linear combination of $S_j^{A_i}$ and $M_1$.  Thus, $\mathrm{H}(M_\ell|S_A)=0$ by Lemma \ref{lem: 2.1}.
\end{proof}}

\subsection{Proof of Lemma~\ref{lem: mainSecure} - Individual Security}\label{proof:lem_security}

\begin{proof}
    Let $R=\{S_1,\ldots,S_w\}$ be the random shares that are distributed to the participants. Also, let $S_{u_1},\ldots,S_{u_t}$ denote the shares in $S_U$. Note that, since $S_U\subseteq S$, we have $\lvert S_U\rvert=t < w+1=\lvert S\rvert+1$.
    
    Without making any replacements, $\mathrm{I}(M_1;S_U)=\mathrm{H}(S_U)-\mathrm{H}(S_U|M_1)=0$ from the underlying secret sharing scheme \cite{stinson}. Then $\exists$ a representative matrix $\mathcal{A}\in\mathbb{F}_q^{t\times w+1}$ of $S_U$ such that 
    \begin{equation*}
        \mathcal{A}
        \left(\begin{IEEEeqnarraybox*}[][c]{,c,}
            M_1\\
            S_1\\
            S_2\\
            \vdots\\
            S_w
        \end{IEEEeqnarraybox*}\right)=
        \left(\begin{IEEEeqnarraybox*}[][c]{,c,}
            S_{u_1}\\
            S_{u_2}\\
            \vdots\\
            S_{u_t}
        \end{IEEEeqnarraybox*}\right)
    \end{equation*}
    Then to get the representative matrix $\mathcal{A}_1\in\mathbb{F}_q^{t\times w+1}$ of $S_U|M_1$, multiply the first column of $\mathcal{A}$ by $0$. Since $\mathrm{H}(S_U)-\mathrm{H}(S_U|M_1)\leq t\log_2{q}-\mathrm{H}(S_U|M_1)=0$, we must have that $\mathcal{A},\mathcal{A}_1$ both full rank.
    
    Now suppose that we've replaced $m-2$ shares and the representative matrices of $S_U$ and $S_U|M_\ell$ are full rank for any $\ell \in \{1,\ldots,m-1\}$. In other words, suppose $\mathrm{I}(M_\ell;S_U)=0$ for any $\ell \in \{1,\ldots,m-1\}$.
    
    Let $\mathcal{A}$ be the representative matrix of $S_U$ after replacing $m-2$ shares, i.e., 
    \begin{equation*}
        \mathcal{A}
        \left(\begin{IEEEeqnarraybox*}[][c]{,c,}
            M_1\\
            \vdots\\
            M_{m-1}\\
            S_{m-1}\\
            \vdots\\
            S_w
        \end{IEEEeqnarraybox*}\right)=
        \left(\begin{IEEEeqnarraybox*}[][c]{,c,}
            S_{u_1}\\
            S_{u_2}\\
            \vdots\\
            S_{u_t}
        \end{IEEEeqnarraybox*}\right)
    \end{equation*}
    
    When we replace $S_{m-1}$ with a linear combination of the form $aM_1+bM_m$,  $S_U$ can now be represented by
    \begin{equation*}
        \mathcal{A}'
        \left(\begin{IEEEeqnarraybox*}[][c]{,c,}
            M_1\\
            \vdots\\
            M_{m-1}\\
            M_m\\
            S_m\\
            \vdots\\
            S_w
        \end{IEEEeqnarraybox*}\right)=
        \left(\begin{IEEEeqnarraybox*}[][c]{,c,}
            S_{u_1}\\
            S_{u_2}\\
            \vdots\\
            S_{u_t}
        \end{IEEEeqnarraybox*}\right)
    \end{equation*}
    where $\mathcal{A}'$ is obtained by applying the column operations $aC_m+C_1$ and $bC_m$ to $\mathcal{A}$. Since column operations preserve rank, $\mathcal{A}'$ is still full rank. Similarly, if $\mathcal{A}_\ell$ is the representative matrix of $S_U|M_\ell$ for $\ell\in\{1,\ldots,m-1\}$, then, after replacing $S_{m-1}$, the new representative matrix of $S_U|M_\ell$ is obtained by applying the column operation $bC_m$ to $\mathcal{A_\ell}$. Call these new full rank matrices $\mathcal{A}_\ell'$. Lastly, the representative matrix $\mathcal{A}_m\in\mathbb{F}_q^{t\times w+1}$ of $S_U|M_m$ is obtained by multiplying the $m^{th}$ column of $\mathcal{A}'$ by 0. Suppose, FTSOC, that $\mathcal{A}_m$ is not full rank. Let $C_1,\ldots,C_{w+1}$ be the columns of $\mathcal{A}$ and $C_1',\ldots,C_{w+1}'$ be the columns of $\mathcal{A}_m$. Since $\mathcal{A}_m$ is not full rank, $\exists$ a linear combination $\beta_1C_1'+\cdots+\beta_mC_m'+\cdots+\beta_{w+1}C_{w+1}'=0$ such that $\beta_i\neq0$ for some $i$. Then \[\beta_1C_1'+\cdots+\beta_mC_m'+\cdots+\beta_{w+1}C_{w+1}'\] \[=\beta_1(aC_m+C_1)+\cdots+\beta_m(0(bC_m))+\cdots+\beta_{w+1}C_{w+1}\] \[=\beta_1C_1+\cdots+\beta_maC_m+\cdots+\beta_{w+1}C_{w+1}=0.\] Thus $\mathcal{A}$ is not full rank, which is a contradiction. Hence, $\mathcal{A}_m$ must be full rank. Therefore, $\mathrm{H}(S_U)=\mathrm{H}(S_U|M_\ell)=t\log_2{q}$ for $\ell\in\{1,\ldots,m\}$ by Lemmas \ref{lem: 2.3} and \ref{lem: 2.4}. Thus, $\mathrm{I}(M_\ell;S_U)=\mathrm{H}(S_U)-\mathrm{H}(S_U|M_\ell)=t\log_2{q}-t\log_2{q}=0$ for $\ell\in\{1,\ldots,m\}$.
\end{proof}

\subsection{Proof of Theorem~\ref{teo: nonRep} - Non-Replaceable}\label{proof:teo_nonRep}

\begin{proof}
    A share $S_j^{A_i}$ being non-replaceable implies that the participant $P_j\notin A$ and the subset $A_i\setminus P_j\not\subseteq A$ for some authorized subset $A\in \Gamma$.

    Since $A$ is an authorized subset, $A$ can compute $M_1$ as a linear combination of the shares in $S_A$.

    Since the participant $P_j\notin A$, then their share $S_j^{A_i}\notin S_A$. Also, since $A_i\setminus\{P_j\}\not\subseteq A$, there must exist (at least one) $P_\alpha\in A_i\setminus\{P_j\}$ such that $P_\alpha\notin A$. Then their share $S_\alpha^{A_i}\notin S_A$. Since the shares in $S_{A_i}$ sum to $M_1$, we can write $S_j^{A_i}+S_\alpha^{A_i}+\sum_{P_t\in A\cap A_i}{S_t^{A_i}}=M_1$. By replacing $S_j^{A_i}$ with $aM_1+bM_\ell$ and rearranging this equation, we are able to get $M_\ell+S_\alpha^{A_i}$ as a linear combination of $M_1$ and the shares $S_t^{A_i}$ where $t$ is such that $P_t\in A\cap A_i$. In other words, $M_\ell+S_\alpha^{A_i}=\sum_{P_t\in A\cap A_i}{c_tS_t^{A_i}}+cM_1$ for some constants $c,c_t\in \mathbb{F}_q$. Then 
    \begin{align}
        &\mathrm{H}(M_\ell|S_A)\\
        &=\mathrm{H}\left(\sum_{P_t\in A\cap A_i}{c_tS_t^{A_i}}+cM_1-S_\alpha^{A_i}|S_A\right)\\ \label{eq: 6} &=\mathrm{H}(-S_\alpha^{A_i})\\ 
        \label{eq:7} &=\log_2{q}
    \end{align} 
    where \eqref{eq: 6} follows from Lemma \ref{lem: 2.2} since $S_t^{A_i}\in S_A$ for $P_t\in A\cap A_i$ and $M_1$ is a linear combination of the shares in $S_A$. Finally, \eqref{eq:7} follows from Lemma \ref{lem: 2.3}.
\end{proof}

\subsection{Preliminaries}
Here, we provide a few relevant Lemmas for the results presented in this work.

\begin{lemma} \label{lem: 2.1}
    If $Y=f(X)$, then $\mathrm{H}(Y|X)=0$. 
\end{lemma}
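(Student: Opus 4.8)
The plan is to prove the statement directly from the definition of conditional entropy, showing that every contributing term vanishes. First I would expand
\begin{equation*}
\mathrm{H}(Y|X) = \sum_{x} p(x)\, \mathrm{H}(Y\mid X = x) = -\sum_{x}\sum_{y} p(x,y)\log p(y\mid x),
\end{equation*}
where the sums range over the supports of $X$ and $Y$, and $p$ denotes the relevant probability mass functions.

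The key observation is that the hypothesis $Y = f(X)$ pins down the conditional distribution of $Y$ given $X$. For any $x$ in the support of $X$, conditioning on $X = x$ forces $Y = f(x)$ with probability one, so $p(y\mid x) = 1$ when $y = f(x)$ and $p(y \mid x) = 0$ otherwise. Hence for every pair $(x,y)$ with $p(x,y) > 0$ we have $p(y\mid x) = 1$, and therefore $\log p(y\mid x) = 0$; pairs with $p(x,y) = 0$ contribute nothing under the standard convention $0\log 0 = 0$. Substituting back makes every summand in the display equal to zero, which yields $\mathrm{H}(Y|X) = 0$.

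The argument is entirely elementary, so there is no substantive obstacle; the only point requiring mild care is the treatment of zero-probability outcomes, which is handled by the usual convention $0\log 0 = 0$ so that summands with $p(x,y)=0$ are simply omitted. An equivalent and perhaps cleaner route I could take instead is to note that $\mathrm{H}(Y\mid X=x) = 0$ for each fixed $x$, since conditioning on $X=x$ leaves $Y$ deterministic and the entropy of a degenerate distribution is zero, and then average over $x$ to conclude. Either way the proof is a one-line consequence of the definitions, which is exactly why the lemma is invoked repeatedly as the basic decodability tool in the results above.
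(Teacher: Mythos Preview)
Your proof is correct and follows essentially the same route as the paper: both expand $\mathrm{H}(Y\mid X)$ as $\sum_x \Pr[X=x]\,\mathrm{H}(Y\mid X=x)$ and then argue that each inner entropy vanishes because $\Pr[f(X)=y\mid X=x]\in\{0,1\}$. Your additional double-sum formulation and explicit mention of the $0\log 0$ convention are minor elaborations on the same idea rather than a different approach.
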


\begin{proof}
    By assumption, we see that
    \begin{align} \label{eq: 3.1}
        \mathrm{H}(Y|X) &=\mathrm{H}(f(X)|X)\\ \label{eq: 3.2} &=-\sum_{x\in X}\p[X=x]\mathrm{H}(f(X)|X=x)\\ \label{eq: 3.3} &=0
    \end{align}

    Where ~\eqref{eq: 3.2} follows from the definition of conditional entropy. Finally, ~\eqref{eq: 3.3} follows from the fact that $\mathrm{H}(f(X)|X=x)=0$. To see this, note that \begin{align*}&\mathrm{H}(f(X)|X=x)\\&=-\sum_{y\in Y}{\p[f(X)=y|X=x]\log_2{(\p[f(X)=y|X=x])}}\end{align*} and $\p[f(X)=y|X=x]=\p[f(x)=y]$ which will either be $0$ or $1$ for all $y\in Y$.   
\end{proof}

In the context of this paper, Lemma \ref{lem: 2.1} implies that, if a message can be written as a linear combination of a given set of shares, then the entropy of that message given those shares is $0$. In other words, for decodability, we need only show that every message is a linear combination of shares held by an authorized subset. The remaining Lemmas are relevant for proving the individual security of a scheme.

\begin{lemma} \label{lem: 2.2}
    If $X$ and $Y$ are independent, $\mathrm{H}(X+Y|X)=\mathrm{H}(Y)$.
\end{lemma}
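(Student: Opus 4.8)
The plan is to mimic the structure of the proof of Lemma~\ref{lem: 2.1}: expand the conditional entropy as an average over the values of $X$, and then show that each conditioned term equals $\mathrm{H}(Y)$. First I would write, directly from the definition of conditional entropy,
\begin{equation*}
\mathrm{H}(X+Y\mid X)=\sum_{x}\p[X=x]\,\mathrm{H}(X+Y\mid X=x).
\end{equation*}
The goal then reduces to showing that $\mathrm{H}(X+Y\mid X=x)=\mathrm{H}(Y)$ for every $x$ with $\p[X=x]>0$, after which the result follows by averaging.

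Second, I would fix such an $x$ and substitute the known value of $X$: conditioned on the event $\{X=x\}$, the random variable $X+Y$ coincides with $x+Y$, so $\mathrm{H}(X+Y\mid X=x)=\mathrm{H}(x+Y\mid X=x)$. This is where independence enters. Because $X$ and $Y$ are independent, the conditional distribution of $Y$ given $\{X=x\}$ is identical to the marginal distribution of $Y$; hence $\mathrm{H}(x+Y\mid X=x)=\mathrm{H}(x+Y)$, where $x+Y$ is now understood under the marginal law of $Y$.

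Third, I would invoke the invariance of entropy under relabeling of outcomes. Since all shares and messages live in the finite field $\mathbb{F}_q$, the translation map $y\mapsto x+y$ is a bijection on $\mathbb{F}_q$, so $x+Y$ and $Y$ differ only by a permutation of the underlying outcomes that leaves their probabilities unchanged. Consequently $\mathrm{H}(x+Y)=\mathrm{H}(Y)$. Substituting this back and using $\sum_x\p[X=x]=1$ gives
\begin{equation*}
\mathrm{H}(X+Y\mid X)=\sum_x\p[X=x]\,\mathrm{H}(Y)=\mathrm{H}(Y),
\end{equation*}
as claimed.

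The computation is routine; the only point requiring care is the two-step reduction $\mathrm{H}(X+Y\mid X=x)=\mathrm{H}(x+Y)=\mathrm{H}(Y)$, which uses independence exactly once (to replace the conditional law of $Y$ by its marginal) and the field structure exactly once (to recognize that adding the constant $x$ merely permutes outcomes). I expect this to be the main, albeit minor, conceptual hurdle, since the identity would genuinely fail if $X$ and $Y$ were dependent, or over an alphabet on which $y\mapsto x+y$ were not a bijection.
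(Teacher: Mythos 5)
Your proof is correct and takes essentially the same approach as the paper's: both expand $\mathrm{H}(X+Y\mid X)$ as an average of $\mathrm{H}(X+Y\mid X=x)$ over $x$ and use independence together with the translation $y\mapsto x+y$ to identify each conditioned term with $\mathrm{H}(Y)$. The only difference is presentational—you separate the two steps (replacing the conditional law of $Y$ by its marginal, then invoking entropy invariance under the bijection $y\mapsto x+y$), whereas the paper compresses them into the single identity $\p[X+Y=x+y\mid X=x]=\p[Y=y]$ inside the double sum.
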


\begin{proof}
    We want to show that the conditional entropy of $X+Y$ given $X$ is equal to the entropy of $Y$. By definition of conditional entropy, we get
    \begin{align*} 
        &\mathrm{H}(X+Y|X)\\&=\sum_{x\in X}{\p[X=x]\mathrm{H}(X+Y|X=x)} \\
        &= \begin{multlined}[t]
            -\sum_{x\in X}{\p[X=x]}\\
            \sum_{x+y\in X+Y}\p[X+Y=x+y|X=x]\\
            \qquad\qquad\qquad\log_{2}{(\p[X+Y=x+y|X=x]})
        \end{multlined}  \\
        &= -\sum_{x\in X}{\p[X=x]}\sum_{y\in Y}{\p[Y=y]\log_{2}(\p[Y=y])} \\
        &= \sum_{x\in X}{\p[X=x]}\mathrm{H}(Y) \\
        &= \mathrm{H}(Y)
    \end{align*}
\end{proof}

\begin{lemma} \label{lem: 2.3}
    If $X$ is uniformly distributed over $\mathbb{F}_q^m$, then $\mathrm{H}(X)=m\log_2{q}$.
\end{lemma}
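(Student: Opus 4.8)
The plan is to evaluate the Shannon entropy directly from its definition, exploiting the fact that a uniform distribution puts equal mass on every outcome. First I would observe that $\mathbb{F}_q^m$ has exactly $q^m$ elements, since each of the $m$ coordinates independently takes one of $q$ values; hence uniformity forces $\p[X=x]=1/q^m$ for every $x\in\mathbb{F}_q^m$.

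The key step is to substitute this constant probability into the entropy formula $\mathrm{H}(X)=-\sum_{x}\p[X=x]\log_2\p[X=x]$. Because every term is identical, the sum collapses: the $q^m$ summands each equal $(1/q^m)\log_2(q^m)$, so their total is simply $\log_2(q^m)$. Applying the logarithm identity $\log_2(q^m)=m\log_2 q$ then yields the claimed value $\mathrm{H}(X)=m\log_2 q$.

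I expect no genuine obstacle here, as this is the textbook fact that a uniform distribution on a set of size $N$ has entropy $\log_2 N$, specialized to $N=q^m$. The only points worth stating carefully are the cardinality count $\lvert\mathbb{F}_q^m\rvert=q^m$ and the observation that uniformity makes the sum a single repeated term, so that no estimation or convexity argument is needed — the evaluation is exact and immediate.
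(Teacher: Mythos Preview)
Your proposal is correct and matches the paper's own proof essentially step for step: both note that $\p[X=x]=1/q^m$ by uniformity, substitute into the definition of entropy, collapse the $q^m$ identical summands, and simplify $\log_2(q^m)=m\log_2 q$. There is nothing to add or adjust.
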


\begin{proof}
     Since $X$ is uniformly distributed over $\mathbb{F}_q^m$, we have that $\p[X=x]=\frac{1}{q^m}$ for all $x\in\mathbb{F}_q^m$. Then 
    \begin{align*}
        \mathrm{H}(X)&=-\sum_{x\in X}{\p[X=x]\log_2{\p[X=x]}}\\
        &=-\sum_{x\in\mathbb{F}_q^m}{\frac{1}{q^m}\log_2{\frac{1}{q^m}}}\\
        &=-q^m\frac{1}{q^m}\log_2{\frac{1}{q^m}}\\
        &=-(\log_2{1}-\log_2{q^m})\\
        &=m\log_2{q}
    \end{align*}

    When $m=1$, this is equivalent to $\mathrm{H}(X)=\log_2{q}$ if $X$ is uniformly distributed over $\mathbb{F}_q$.
\end{proof}

\begin{lemma} \label{lem: 2.4}
    Let $\mathcal{A}\in\mathbb{F}_q^{m\times n}$ with $m<n$ such that $\mathcal{A}$ has full rank, $X\in\mathbb{F}_q^n$ with every entry uniformly distributed over $\mathbb{F}_q$, and let $\mathcal{A}X=Y\in\mathbb{F}_q^m$. Then $Y$ is uniformly distributed over $\mathbb{F}_q^m$.
\end{lemma}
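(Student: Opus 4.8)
The plan is to compute the distribution of $Y$ directly by counting preimages. First I would observe that since $\mathcal{A}\in\mathbb{F}_q^{m\times n}$ has full rank $m$ (and $m<n$), the linear map $x\mapsto\mathcal{A}x$ from $\mathbb{F}_q^n$ to $\mathbb{F}_q^m$ is surjective, and by rank-nullity its kernel $\ker(\mathcal{A})$ has dimension $n-m$, hence contains exactly $q^{n-m}$ elements. For any target $y\in\mathbb{F}_q^m$, surjectivity guarantees at least one solution $x_0$ of $\mathcal{A}x=y$, and the full solution set is the coset $x_0+\ker(\mathcal{A})$, which therefore has exactly $q^{n-m}$ elements, a count that is independent of $y$.

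Next I would use that the entries of $X$ are independent and uniform over $\mathbb{F}_q$, so that $X$ is uniform over $\mathbb{F}_q^n$, i.e. $\mathrm{Pr}[X=x]=q^{-n}$ for every $x$. Then for each $y\in\mathbb{F}_q^m$,
\begin{align*}
    \mathrm{Pr}[Y=y]=\sum_{x:\,\mathcal{A}x=y}\mathrm{Pr}[X=x]=q^{n-m}\cdot q^{-n}=q^{-m}.
\end{align*}
Since this value is the same for every $y\in\mathbb{F}_q^m$, the vector $Y$ is uniformly distributed over $\mathbb{F}_q^m$, which is exactly the claim.

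The only point requiring genuine care — and the step I would treat as the crux — is establishing that every fiber $\{x:\mathcal{A}x=y\}$ has the same cardinality $q^{n-m}$; this is precisely where the full-rank hypothesis enters, both to guarantee that no fiber is empty (surjectivity) and to pin down the kernel dimension via rank-nullity. An equivalent route, which I might present instead to sidestep the explicit counting, is to choose an invertible $P\in\mathbb{F}_q^{n\times n}$ with $\mathcal{A}P=[\,I_m\mid 0\,]$ (possible since $\mathcal{A}$ has full row rank, by bringing it to column-reduced form), set $X'=P^{-1}X$, note that $X'$ is again uniform over $\mathbb{F}_q^n$ because $P^{-1}$ acts as a bijection of $\mathbb{F}_q^n$ and so preserves the uniform distribution, and observe that $Y=\mathcal{A}X=\mathcal{A}PX'=[\,I_m\mid 0\,]X'$ is just the first $m$ coordinates of $X'$, hence uniform over $\mathbb{F}_q^m$ as the marginal of a uniform vector.
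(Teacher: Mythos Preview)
Your argument is correct. Your primary route---counting that every fiber $\{x:\mathcal{A}x=y\}$ has size $q^{n-m}$ via rank--nullity and then summing $\mathrm{Pr}[X=x]$ over the fiber---is a genuinely different and more elementary approach than the paper's. The paper instead extends $\mathcal{A}$ by appending $n-m$ additional linearly independent rows to form an invertible $\mathcal{A}'\in\mathbb{F}_q^{n\times n}$, argues that $Y'=\mathcal{A}'X$ is uniform on $\mathbb{F}_q^n$ (because an invertible map pushes forward the uniform distribution to itself), and then reads off $Y$ as the first $m$ coordinates of $Y'$. Your alternative route---column-reducing via an invertible $P$ so that $\mathcal{A}P=[\,I_m\mid 0\,]$ and observing $Y$ is the first $m$ coordinates of the uniform vector $P^{-1}X$---is essentially the dual of the paper's row-extension idea, arriving at the same ``marginal of a uniform vector'' conclusion from the column side rather than the row side. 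The fiber-counting argument has the advantage of being entirely self-contained and of making the role of the full-rank hypothesis (surjectivity plus fixed kernel size) completely transparent; the paper's extension argument has the mild advantage of avoiding any explicit cardinality computation, at the cost of invoking the existence of a completion to an invertible matrix.
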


\begin{proof}
    Consider $\mathcal{A}'=\left(\begin{smallmatrix}
        \mathcal{A}\\ B_{m+1}\\ \vdots\\ B_n
    \end{smallmatrix}\right)\in\mathbb{F}_q^{n\times n}$ where the rows $B_{m+1},\ldots,B_n$ are chosen to be linearly independent and linearly independent from the rows of $\mathcal{A}$. Then $\mathcal{A}'$ is invertible. Let $\mathcal{A}'X=Y'\in\mathbb{F}_q^n$. Since each element of $X$ is uniformly distributed over $\mathbb{F}_q$, every element of $Y'$ is also uniformly distributed $\mathbb{F}_q$. To see this, $\p[Y'_i=y'_i]=\p[X_i=(\mathcal{A}'^{-1}y')_i]=\frac{1}{q}$ for all $1\leq i\leq n$. Note that $Y'=\left(\begin{smallmatrix}
        Y\\ B_{m+1}X\\ \vdots\\ B_nX
    \end{smallmatrix}\right)$. Let $Y'_{[m]}$ denote the first $m$ entries of $Y'$. Then $\p[Y=y]=\p[Y'_{[m]}=y]=\p[Y'_1=y_1]\cdots\p[Y'_m=y_m]=\frac{1}{q^m}$. Therefore, $Y$ is uniformly distributed over $\mathbb{F}_q^m$.
\end{proof}

\end{document}